\title{\LARGE \bf
Modelling Resource Contention in Multi-Robot Task Allocation Problems with Uncertain Timing
}
\author{Andrew W. Palmer, Andrew J. Hill, and Steven J. Scheding$^{1}$
\thanks{This work was supported by the Rio Tinto Centre for Mine Automation and the Australian Centre for Field Robotics, University of Sydney, Australia.}
\thanks{$^{1}$ The authors were with the Australian Centre for Field Robotics, University of Sydney, Australia,
        email: andrew.palmer@siemens.com, a.hill@acfr.usyd.edu.au, steve.scheding@team.telstra.com.}%
}
\begin{document}
\bstctlcite{IEEEexample:BSTcontrol}

\begin{table*}
\copyright 2018 IEEE. Personal use of this material is permitted. Permission from IEEE must be obtained for all other uses, in any current or future media, including reprinting/republishing this material for advertising or promotional purposes, creating new collective works, for resale or redistribution to servers or lists, or reuse of any copyrighted component of this work in other works.

The published version of this article can be found at https://doi.org/10.1109/ICRA.2018.8460981
\end{table*}

\pagebreak

\maketitle
\thispagestyle{empty}
\pagestyle{empty}

\begin{abstract}
This paper proposes an analytical framework for modelling resource contention in multi-robot systems, where the travel times and task durations are uncertain. It uses several approximation methods to quickly and accurately calculate the probability distributions describing the times at which the tasks start and finish. Specific contributions include exact and fast approximation methods for calculating the probability of a set of independent normally distributed random events occurring in a given order, a method for calculating the most likely and $n$-th most likely orders of occurrence for a set of independent normally distributed random events that have equal standard deviations, and a method for approximating the conditional probability distributions of the events given a specific order of the events. The complete framework is shown to be faster than a Monte Carlo approach for the same accuracy in two multi-robot task allocation problems. In addition, the importance of incorporating uncertainty is demonstrated through a comparison with a deterministic method. This is a general framework that is agnostic to the optimisation method and objective function used, and is applicable to a wide range of problems.
\end{abstract}

\section{Introduction}
Multi-Robot Task Allocation (MRTA) problems arise in many scenarios, and involve assigning a set of robots to a set of tasks. Much of the existing solution methods assume deterministic robot dynamics, and ignore the effects of resource contention. This paper explicitly models uncertainty in the travel times and task durations of the robots, and the effects that result from multiple robots using mutually exclusive resources. An analytical framework for calculating the cost of a set of task assignments is developed, and it is shown to outperform deterministic and Monte Carlo approaches.

This work is motivated by multi-robot scenarios where there is a shared resource that cannot be used by all of the robots at once, such as an intersection or a recharging point, as well as situations where the robots cannot perform their tasks in parallel and must wait for previous robots to finish before commencing their task, such as construction and maintenance tasks. The techniques developed in this paper for modelling these effects are applicable to a wide range of robotics scenarios such as multi-robot path planning \cite{Alonso-Mora2015} and planning for recharging robots \cite{Palmer2016b}, as well as other scenarios such as machine shop scheduling \cite{Kouvelis2000}. While there is literature on MRTA problems that incorporate either uncertainty \cite{Liu2011, Wu2015, Nam2015, Nam2015a} or resource contention \cite{Vaughan2000, Nam2015b}, to the best of the authors' knowledge this is the first work to combine the two. The framework developed in this paper analytically calculates the probability distributions describing the times at which the tasks are started and completed. This framework is independent of the choice of optimisation method used, facilitates the use of any objective function, and can also be used in conjunction with chance constraints. In addition to the framework, specific contributions of this paper include:

\begin{itemize}
	\item exact and fast approximation methods for calculating the probability of a set of independent normally distributed random events occurring in a given order;
	\item a method for calculating the most likely and $n$-th most likely orderings of independent normally distributed random events, when the standard deviations of the occurrence times for each event are equal; and
	\item a normal approximation to the conditional probability distribution describing a random event given a specific order of events.
\end{itemize}

In the following sections, Section \ref{s:rel_lit} presents an overview of related literature, and Section \ref{s:framework} develops the analytical framework. Section \ref{s:results} then evaluates the utility of the framework in two simulation examples, and Section \ref{s:conc} concludes the paper with suggestions for future research.


\section{Related literature and background}\label{s:rel_lit}

MRTA problems have been extensively studied in the literature---a recent review of the state-of-the-art solution methods for MRTA problems is presented in \cite{Khamis2015}. The authors note that solving MRTA problems with complex constraints, including uncertainty and resource contention, is still an open question. Uncertainty has been considered by several papers, but has been dealt with by each in different ways. The interval Hungarian algorithm was developed in \cite{Liu2011} to deal with problems that have uncertainty in the utility estimate of a given assignment. This method relies on knowing the Probability Density Function (PDF) describing the utility. The interval Hungarian algorithm can be applied to problems with resource contention, using the framework developed in this paper to calculate the PDFs of the utility. Task allocation in teams consisting of both robots and humans was considered by \cite{Wu2015}, where humans have the option of rejecting a task assignment. They developed a replanning algorithm using a multi-agent Markov decision process that incorporated the probability that a human will reject the task. Finally, sensitivity analysis approaches were used in \cite{Nam2015,Nam2015a} to quantify when a task assignment should be recomputed in response to changes in the environment. The effect of resource contention was not included in any of the above papers. 

Resource contention was considered in \cite{Vaughan2000}, where a team of robots operating in an office building frequently encountered areas where only one robot could operate at a time, such as doorways and cluttered corridors. The authors introduced a decentralised method that used aggression signalling to resolve interactions during task execution. A method for calculating the optimal task assignment in scenarios with resource contention was developed in \cite{Nam2015b}. The costs associated with the resource contention were modelled using a penalisation function, and they showed that the problem is NP-hard when the penalisation function is polynomial-time computable. Their approach used Murty's ranking algorithm to find next best assignments when ignoring contention costs, and then evaluated these assignments with the contention costs included. The above approach could be used in conjunction with the framework developed in this paper for scenarios where uncertainty is also considered. 

The proposed framework developed in this paper uses a number of approximations for performing operations on normal distributions. Two existing approximations that are used in the framework are introduced in the following subsections. 

\subsection{Maximum of normally distributed random variables} \label{s:max}

This subsection presents an approximation for the mean and variance of the maximum of two independent normally distributed variables, originally presented in \cite{Clark1961}. Consider $X \sim \mathcal{N}(\mu_{X}, \sigma_{X}^{2})$ and $Y \sim \mathcal{N}(\mu_{Y}, \sigma_{Y}^{2})$. Let
\begin{equation}
\alpha = \sqrt{\sigma_{X}^{2} + \sigma_{Y}^{2}}, \qquad \beta = \frac{\mu_{X} - \mu_{Y}}{\alpha}.
\end{equation}
Using the following notation:
\begin{equation} \label{eq:normal_pdf}
\phi(x) = \frac{\exp(-x^{2}/2)}{\sqrt{(2\pi)}},
\end{equation}
\begin{equation} \label{eq:normal_cdf}
\Phi(x) = \int_{-\infty}^{x}\phi(t) \textrm{d}t = \frac{1}{2}\left(1 + \textrm{erf}\left(\frac{x}{\sqrt{2}}\right)\right),
\end{equation}
where erf(.) is the error function, defined as
\begin{equation}
\textrm{erf}(t) = \frac{2}{\sqrt{\pi}}\int\limits_{0}^{t}\exp\left(-\tau^{2}\right) \textrm{d}\tau,
\end{equation}
$\max(X,Y)$ is approximated by a normal distribution, $Z \sim \mathcal{N}(\mu_{Z}, \sigma_{Z}^{2})$, where
\begin{equation}
\mu_{Z} = \mu_{X}\Phi(\beta) + \mu_{Y}\Phi(-\beta) + \alpha\phi(\beta),
\end{equation}
\begin{multline}
\sigma_{Z}^{2} = (\mu_{X}^{2} + \sigma_{X}^{2})\Phi(\beta) + (\mu_{Y}^{2} + \sigma_{Y}^{2})\Phi(-\beta)\\
+ (\mu_{X} + \mu_{Y})\alpha\phi(\beta)   -\mu_{Z}^{2}. 
\end{multline}
For more than two variables, the author suggests recursively applying the above approximation to pairs of variables. 


\subsection{Conditioning normally distributed random variables} \label{s:cond}

This subsection summarises a method presented in \cite{Palmer2016} for calculating the mean and variance of a normally distributed random variable, $B$, that is conditioned on other independent normally distributed random variables, $A$ and $C$, to satisfy the inequality $A < B < C$. This method can be used to calculate $(B|B<C)$ and $(B|A<B)$ by using $\mu_{A} = -\infty$ and $\mu_{C} = \infty$ respectively. The mean and variance of the conditional probability distribution are denoted as $\hat{\mu}_{B}$ and $\hat{\sigma}_{B}^{2}$ respectively. First, the random variables are transformed such that $B$ is described by a standard normal distribution. This yields transformations of $A$ and $C$ to $D$ and $E$ respectively:
\begin{equation} \label{eq:begin_conditional}
D \sim \mathcal{N}(\mu_{D},\sigma_{D}^{2}), \qquad E \sim \mathcal{N}(\mu_{E},\sigma_{E}^{2}), 
\end{equation}
where:
\begin{equation}\label{eq:transform1}
\mu_{D} = \frac{\mu_{A} - \mu_{B}}{\sigma_{B}}, \qquad \sigma_{D}^{2} = \frac{\sigma_{A}^{2}}{\sigma_{B}^{2}},
\end{equation}
\begin{equation}\label{eq:transform2}
\mu_{E} = \frac{\mu_{C} - \mu_{B}}{\sigma_{B}}, \qquad \sigma_{E}^{2} = \frac{\sigma_{C}^{2}}{\sigma_{B}^{2}}.
\end{equation}
Then, the mean of the conditional probability distribution is given by
\begin{multline}
\mu_{N} = 2\alpha\left(\frac{1}{\sqrt{\sigma_{D}^2 + 1}}\exp\left(-\frac{\mu_{D}^{2}}{2(\sigma_{D}^2 + 1)}\right) \right. \\
\qquad \left. - \frac{1}{\sqrt{\sigma_{E}^2 + 1}}\exp\left(-\frac{\mu_{E}^{2}}{2(\sigma_{E}^2 + 1)}\right)\right),
\end{multline}
where
\begin{equation}
\alpha = \frac{1}{\sqrt{2\pi}\left[\textrm{erf}\left(\frac{\mu_{E}}{\sqrt{2\left(\sigma_{E}^{2} + 1\right)}}\right) - \textrm{erf}\left(\frac{\mu_{D}}{\sqrt{2\left(\sigma_{D}^{2} + 1\right)}}\right)\right]},
\end{equation}
and the variance is given by
\begin{equation}
\begin{aligned}
\sigma_{N}^{2} &= \alpha \left[ \sqrt{2\pi} \left( \left( 1 + \mu_{N}^{2} \right) \left( \textrm{erf} \left( \frac{\mu_{E}}{\sqrt{2(\sigma_{E}^{2} + 1)}} \right) \right. \right. \right. \\
&- \left. \left. \left.  \textrm{erf}\left(\frac{\mu_{D}}{\sqrt{2(\sigma_{D}^{2}+1)}} \right) \right) \right) \right. \\ 
& + \left. \frac{2}{\sqrt{\sigma_{D}^2 + 1}}\left(\frac{\mu_{D}}{\sigma_{D}^2 + 1} - 2\mu_{N} \right)\exp\left(-\frac{\mu_{D}^{2}}{2(\sigma_{D}^2 + 1)}\right) \right. \\
& - \left. \frac{2}{\sqrt{\sigma_{E}^2 + 1}}\left(\frac{\mu_{E}}{\sigma_{E}^2 + 1} - 2\mu_{N} \right)\exp\left({-\frac{\mu_{E}^{2}}{2(\sigma_{E}^2 + 1)}}\right)\right].
\end{aligned}
\end{equation}
The new mean and standard deviation are then transformed back to the original reference frame to give $\hat{\mu}_{B}$ and $\hat{\sigma}_{B}^{2}$:
\begin{equation} \label{eq:end_conditional}
\hat{\mu}_{B} = \mu_{N} \sigma_{B} + \mu_{B}, \qquad \hat{\sigma}_{B}^{2} = \sigma_{N}^{2} \sigma_{B}^{2}. 
\end{equation}
The approach used to derive the above method is reliant on the condition that $\mu_{A} < \mu_{C}$, and the assumption that the probability distributions describing $A$ and $C$ have limited overlap.

\section{Framework} \label{s:framework}

This section develops the analytical framework for modelling resource contention when uncertainty is considered. Two cases are examined---in the first, the order in which the robots must use the resource is specified, while in the second, the order is simply the order in which they arrive. The robots have a time that they arrive at and begin queuing for the resource, $T^{a}_{X}$, and a duration for using the resource, $D_{X}$, that are independent normally distributed random variables, where $X$ identifies the robot. 

For the first case, consider a simple example of two robots, $A$ and $B$, that both need to perform an action at the same location. Only one robot can perform their action at the location at a time, so the second robot may have to wait for the first robot to complete its action before commencing its own action. If robot $A$ must perform its action before $B$, then the time that robot $A$ starts its action, $T^{s}_{A}$, is simply the time that it arrives at the location, $T^{a}_{A}$. The time that it finishes the action at, $T^{f}_{A}$, is given by:
\begin{equation} \label{e:a_finish}
T^{f}_{A} = T^{s}_{A} + D_{A},
\end{equation}
where the mean and variance of $T^{f}_{A}$ are the sum of the means and variances respectively of $T^{s}_{A}$ and $D_{A}$. Since robot $B$ can only commence its action after $A$ has finished, the time that robot $B$ starts its action, $T^{s}_{B}$ is given by:
\begin{equation} \label{eq:max}
T^{s}_{B} = \max(T^{f}_{A}, T^{a}_{B}),
\end{equation}
where the $\max$ is calculated using the approximation described in Section \ref{s:max}. The time that robot $B$ completes its action, $T^{f}_{B}$, is then given by:
\begin{equation}\label{eq:finish}
T^{f}_{B} = T^{s}_{B} + D_{B}.
\end{equation}
For 3 or more robots, the start and finish times of their actions are calculated by iteratively applying Eqs. (\ref{eq:max}) and (\ref{eq:finish}) with respect to the previous robot to perform its task.

The second case, where the order in which the robots use the resource is not specified but is instead determined by when the robot begins queuing for the resource, exhibits a First-In First-Out (FIFO) property. Consider a similar multi-robot scenario to the first case where two robots, $A$ and $B$, travel to a location and perform a task. In contrast to the first case, the first robot to arrive at the location is the first to perform its task. For two robots, there are two possible orders in which the tasks can be performed---$A$ followed by $B$, and $B$ followed by $A$. To calculate the completion time for each robot, both orders must be considered. 
Consider the case where $A$ arrives before $B$. The time that $A$ starts its task is first conditioned on the order in which the robots arrive:
\begin{equation}
(T^{s}_{A}|T^{a}_{A} < T^{a}_{B}) = (T^{a}_{A}|T^{a}_{A} < T^{a}_{B}). 
\end{equation}
The time that $A$ completes its task is given by: 
\begin{equation}
(T^{f}_{A}|T^{a}_{A} < T^{a}_{B}) = (T^{s}_{A}|T^{a}_{A} < T^{a}_{B}) + D_{A}. 
\end{equation}
The time that $B$ starts its task is then calculated as:
\begin{equation}
(T^{s}_{B}|T^{a}_{A} < T^{a}_{B}) = \max \left((T^{f}_{A}|T^{a}_{A} < T^{a}_{B}), (T^{a}_{B}|T^{a}_{A} < T^{a}_{B})   \right),
\end{equation}
and the time that $B$ completes its task is given by:
\begin{equation}
(T^{f}_{B}|T^{a}_{A} < T^{a}_{B}) = (T^{s}_{B}|T^{a}_{A} < T^{a}_{B}) + D_{B}. 
\end{equation}
The completion time for each robot, considering all orders of arrival, is calculated by summing the probability weighted completion times for each order of arrival. For robot $A$:
\begin{multline}
T^{f}_{A} = P(T^{a}_{A} < T^{a}_{B})(T^{f}_{A}|T^{a}_{A} < T^{a}_{B}) \\
+ P(T^{a}_{B} < T^{a}_{A})(T^{f}_{A}|T^{a}_{B} < T^{a}_{A}),
\end{multline}
where $P(T^{a}_{X} < T^{a}_{Y})$ is the probability that $X$ arrives before $Y$. This process extends to $n$ robots, with the downside that there are $n!$ orders of arrival that have to be considered. 

The following subsections present methods for calculating the arrival times conditioned on the order of arrival (Section \ref{s:conditioning}) and the probability of an order of arrival (Section \ref{s:prob}). A method for reducing the number of orders of arrival that are considered based on Mutry's ranking algorithm is developed in Section \ref{s:most_likely}. 

\subsection{Conditioning the arrival times on the order of arrival} \label{s:conditioning}

A method for calculating the mean and variance of the random variable $T^{a}_{B}$ conditioned on the arrival order $ABC$, $(T^{a}_{B} | T^{a}_A < T^{a}_B < T^{a}_C)$, was presented in Section \ref{s:cond}. 

\subsubsection{Extension to general conditions}

For cases where the conditions listed in Section \ref{s:cond} are not met, the method is a poor approximation of the mean and variance. To apply this approach in such situations, it is proposed that $T^{a}_{A}$ and $T^{a}_{C}$ be applied one at a time. Applying these conditions one at a time involves using the procedure from Eq. (\ref{eq:begin_conditional}) to Eq. (\ref{eq:end_conditional}) with one of the distributions set to either $-\infty$ or $\infty$, and then using the result of that procedure as the distribution for $T^{a}_{B}$ when applying the remaining condition. 

The order in which the conditions are applied can impact the resultant distribution. In order to determine the order in which the conditions should be applied, a decision tree was learnt using the scikit-learn \cite{scikit-learn} module for Python. The three choices of method are:
\begin{enumerate}
	\item apply $T^{a}_{A}$ and $T^{a}_{C}$ together;
	\item apply $T^{a}_{A}$ followed by $T^{a}_{C}$; and
	\item apply $T^{a}_{C}$ followed by $T^{a}_{A}$. 
\end{enumerate}

The decision tree to determine which method to use is presented in Figure \ref{f:decision_tree}. The parameters $\gamma$ and $\delta$ are an overlap metric and shape metric respectively, as defined in \cite{Palmer2016}:
\begin{equation}
\gamma = \frac{\mu_{C} - \mu_{A}}{\sigma_{C} + \sigma_{A}}, \qquad \delta = \left| \log\left(\frac{\sigma_{A}}{\sigma_{C}}\right)\right|.
\end{equation}

\begin{figure}
	\centering
	\includegraphics[width = 0.4\textwidth]{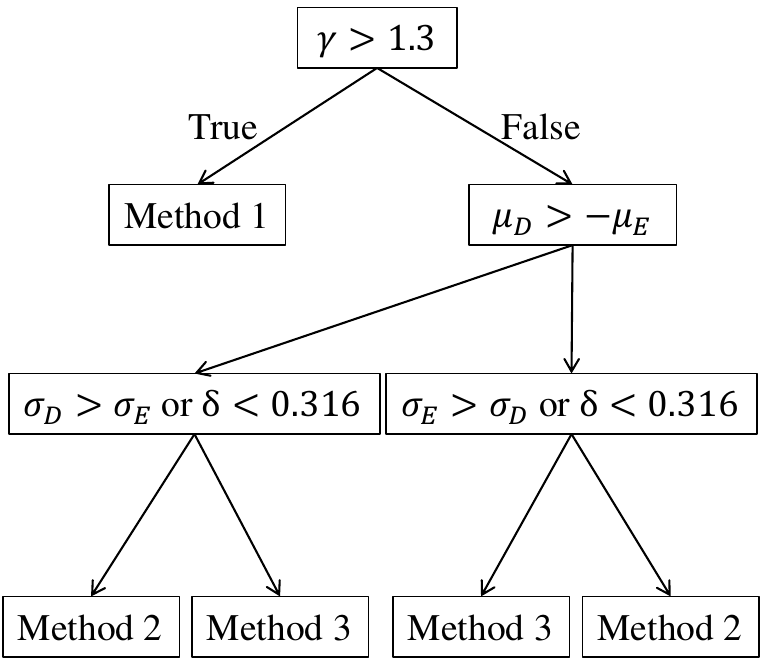}
	\caption{Decision tree for determining the order in which to apply the conditions. Note that the means and standard deviations used here are calculated from the original distributions using Eq.~(\ref{eq:transform1}) and (\ref{eq:transform2}). The first decision checks if the distributions overlap enough to require the conditions to be applied individually, while the second and third decisions compare the relative positions and shapes of the two distributions to determine which should be applied first. } \label{f:decision_tree}
\end{figure}

The decision tree was trained on over 200,000 different combinations of values for $T^{a}_{A}$ and $T^{a}_{C}$, and a separate set of 10,000 combinations of randomly sampled values was used for evaluation. Kullback-Leibler (KL) divergence was used as a measure of error between the method selected by the decision tree and the best method. The best method was selected in 84.8\% of cases. For all cases, the average and Root Mean Squared (RMS) KL divergences were $3.1\text{E}{-3}$ and $9.4\text{E}{-2}$ respectively, and for cases where the incorrect decision was made they were $2.0\text{E}{-2}$ and $2.4\text{E}{-1}$ respectively. The majority of the error was accrued in a few cases where the distributions $T^{a}_{A}$ and $T^{a}_{C}$ have a very low probability of satisfying $T^{a}_{A} < T^{a}_{C}$, resulting in a very high KL divergence ($>1$). Thus, even though this is a poor approximation in a small number of cases, it will have a negligible impact on the result when it is multiplied by the probability of that order occurring.

\subsubsection{Iteratively applying conditions}

So far, only the second robot to arrive in a group of three robots has been considered. Analytically calculating the mean and variance for the conditional arrival time of the other robots is a challenging problem. Instead, it is proposed that the conditions be applied iteratively, thus enabling any number of robots to be considered. For example, the conditional arrival time of the first robot, $(T^{a}_{A} | T^{a}_{A} < T^{a}_{B} < T^{a}_{C})$, can be calculated as:
\begin{equation}
(T^{a}_{A} | T^{a}_{A} < T^{a}_{B} < T^{a}_{C}) = (T^{a}_{A} | T^{a}_{A} < (T^{a}_{B}|T^{a}_{B} < T^{a}_{C})). 
\end{equation}

\subsection{Calculating the probability of an order of arrival} \label{s:prob}

First, consider two robots with arrival times $T^{a}_{A}$ and $T^{a}_{B}$. The probability that $A$ arrives before $B$ is given by:
\begin{equation} \label{eq:prob}
P(T^{a}_{A} < T^{a}_{B}) = \int\limits_{-\infty}^{0}p_{T^{a}_{A} - T^{a}_{B}}(t) \textrm{d}t,
\end{equation}
where $t$ is time, and $p_{X}(t)$ denotes the PDF of the random variable $X$. If $T^{a}_{A} \sim \mathcal{N}(\mu_{A},\sigma_{A}^{2})$ and $T^{a}_{B} \sim \mathcal{N}(\mu_{B},\sigma_{B}^{2})$, then:
\begin{equation} \label{eq:prob_cdf}
P(T^{a}_{A} < T^{a}_{B}) = \frac{1}{2}\left(1 + \textrm{erf} \left(\frac{\mu_{B} - \mu_{A}}{\sqrt{2\left(\sigma_{A}^{2} + \sigma_{B}^{2} \right)}}  \right)  \right). 
\end{equation}

Extending to three robots, $A$, $B$, and $C$, the probability that $A$ arrives before $B$, and that $B$ arrives before $C$, is:
\begin{multline}
P((T^{a}_{A} < T^{a}_{B}) \cap (T^{a}_{B} < T^{a}_{C})) \\
= P(T^{a}_{A} < T^{a}_{B}) \times P(T^{a}_{B} < T^{a}_{C} | T^{a}_{A} < T^{a}_{B}). 
\end{multline}
It is possible to formulate this as a multivariate normal distribution through a linear transformation. Let $X = T^{a}_{A} - T^{a}_{B}$ and $Y = T^{a}_{B} - T^{a}_{C}$, then:
\begin{equation}
\pmb{M} = \begin{bmatrix}
X \\
Y
\end{bmatrix} = \pmb{ST} = \begin{bmatrix}
1 & -1 & 0\\
0 & 1 & -1
\end{bmatrix}
\begin{bmatrix}
T^{a}_{A}\\
T^{a}_{B}\\
T^{a}_{C}
\end{bmatrix},
\end{equation}
\begin{equation}
\pmb{\mu} = \begin{bmatrix}
\mu_{A}\\
\mu_{B}\\
\mu_{C}
\end{bmatrix}, \qquad  
\pmb{\Sigma} = \begin{bmatrix}
\sigma_{A}^{2} & 0 & 0\\
0 & \sigma_{B}^{2} & 0\\
0 & 0 & \sigma_{C}^{2}
\end{bmatrix}. 
\end{equation}
The multivariate normal distribution, $\pmb{M}$, then has mean, $\pmb{\mu}_{\pmb{M}}$ and covariance, $\pmb{\Sigma}_{\pmb{M}}$, calculated using a linear transformation:
\begin{equation}
\pmb{\mu}_{\pmb{M}} = \pmb{S}\pmb{\mu}, 
\end{equation}
and:
\begin{equation}
\pmb{\Sigma}_{\pmb{M}} = \pmb{S}\pmb{\Sigma}\pmb{S}^{T}. 
\end{equation}

The probability $P((T^{a}_{A} < T^{a}_{B}) \cap (T^{a}_{B} < T^{a}_{C}))$ is calculated by evaluating the Cumulative Distribution Function (CDF) of $\pmb{M}$ at $X = 0$ and $Y = 0$. Unfortunately, no analytical solution exists for the CDF of a multivariate normal distribution \cite{Genz2009}. However, a numerical approximation approach based on \cite{Genz1992} is readily available as the \texttt{mvnun} function in the \texttt{stats.mvn} module of the Scipy package for Python. 

The \texttt{mvnun} function is computationally expensive, especially for high-dimensional multivariate distributions. If there are independent parts of the distribution (e.g., $P((T^{a}_{A} < T^{a}_B) \cap (T^{a}_B < T^{a}_C) \cap (T^{a}_D < T^{a}_E) \cap (T^{a}_E < T^{a}_F))$ where $ABC$ are independent from $DEF$), then, in practice when using \texttt{mvnun}, it is significantly faster to calculate the probabilities of each independent part separately and simply multiply the probabilities together, than to compute the probability using the entire multivariate distribution. 

\subsubsection{Estimating the probability} \label{s:prob_est}

Due to the large number of possible orders of arrival when considering multiple robots, calculating the probability of every order of arrival using \texttt{mvnun} can result in long computation times. For example, with 8 robots, there are over 40,000 possible orders of arrival. In this case, \texttt{mvnun} takes 3ms to compute the probability of one arrival order, requiring a total of 120s to calculate the probability of every order. It is therefore desirable to have a fast method of estimating the probability to use either in place of \texttt{mvnun} or to allow unlikely arrival orders to be discarded before \texttt{mvnun} is called. Using conditional probabilities, the probability of a given arrival order of $n$ robots is: 
\begin{multline} \label{eq:est_prob}
\bar{P}\left((T^{a}_{1} < T^{a}_{2}) \cap \dots \cap  (T^{a}_{n-1} < T^{a}_{n})  \right) \\ 
= \prod_{i=2}^{i=n} P \left(\left(T^{a}_{i-1} | \left(T^{a}_{j-2} < T^{a}_{j-1} \; \forall j \in \{3,\dots,i\} \right)\right) < T^{a}_{i}\right). 
\end{multline}
To estimate this probability, the conditional probability distributions are approximated as normal distributions using the approach presented in Section \ref{s:conditioning}, and the probability of each pair of distributions is then calculated using Eq.~\eqref{eq:prob_cdf}.


\subsection{Finding the $n$-th most likely order of arrival} \label{s:most_likely}

Another method of reducing the computational requirements of the framework is to only consider likely orders of arrival. Thus, it is desirable to be able to determine what the most likely orders of arrival are. In the general case, this requires an exhaustive search over all orders of arrival. This section considers the special case where the standard deviations of the arrival time for each robot are equal. In this case, the $n$-th most likely order of arrival can be found using a similar approach to Murty's ranking algorithm for efficiently ranking assignments by their cost \cite{Murty1968}. 


%

\newtheorem{theorem}{Theorem}
\newtheorem{lemma}{Lemma}

\newtheorem{corollary}{Corollary}

\begin{theorem} \label{th:likely_arrival_order}
	The most likely order of arrival is the order in which the mean arrival times are ascending. 
\end{theorem}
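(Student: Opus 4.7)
The plan is a standard adjacent-transposition exchange argument, analogous to bubble sort on inversions. First I would reduce to the local claim that if a permutation $\pi$ has an adjacent inversion---consecutive positions $k, k+1$ with $\mu_{\pi(k)} > \mu_{\pi(k+1)}$---then the permutation $\pi'$ obtained by swapping those two positions has strictly larger ordering probability. Every non-ascending permutation contains at least one such inversion, and each swap strictly decreases the inversion count, so iterating terminates at the ascending permutation, which is therefore the unique maximiser.

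For the local claim, write $a = \pi(k)$ and $b = \pi(k+1)$ with $\mu_a > \mu_b$ and common variance $\sigma^2$. Both ordering events decompose into $n-1$ adjacency constraints; outside the window $\{k-1, k, k+1, k+2\}$ the constraints coincide, and on the window both are equivalent to the symmetric conditions $T^a_{\pi(k-1)} < \min(T^a_a, T^a_b)$ and $\max(T^a_a, T^a_b) < T^a_{\pi(k+2)}$ together with either $T^a_a < T^a_b$ (for $\pi$) or $T^a_b < T^a_a$ (for $\pi'$). Conditioning on all other arrival times, so that $l := T^a_{\pi(k-1)}$ and $u := T^a_{\pi(k+2)}$ become constants, and applying the change of variables $(t_a, t_b) \leftrightarrow (t_b, t_a)$ to the $\pi$-integral, the difference of the two conditional probabilities reduces to
\begin{equation*}
\iint_{\{l < t_b < t_a < u\}} \bigl[\,p(t_a;\mu_a)\,p(t_b;\mu_b) - p(t_a;\mu_b)\,p(t_b;\mu_a)\,\bigr]\,dt_a\,dt_b,
\end{equation*}
where $p(\,\cdot\,;\mu)$ denotes the $\mathcal{N}(\mu,\sigma^2)$ density. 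Expanding the quadratics in the exponents factors this integrand as $p(t_a;\mu_b)\,p(t_b;\mu_a)\bigl[\exp((\mu_a - \mu_b)(t_a - t_b)/\sigma^2) - 1\bigr]$, which is strictly positive on the region because $t_a > t_b$ and $\mu_a > \mu_b$. Averaging over the remaining arrival times preserves this positivity and yields $P(\pi') > P(\pi)$.

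The main obstacle is verifying that the coupling constraints really are symmetric in $(T^a_a, T^a_b)$, and this is precisely why the transposition must be adjacent: for a non-adjacent swap, interleaved adjacency constraints would couple $T^a_a$ and $T^a_b$ asymmetrically through the intervening variables, and the clean cancellation above would break down. The equal-variance hypothesis is also used essentially---it is what makes the swapped bivariate density equal to the original with $(\mu_a, \mu_b)$ exchanged, so that the normalising constants cancel and only the exponential factor needs to be signed. Boundary cases $k = 1$ or $k = n-1$, where $l$ or $u$ is formally $\mp\infty$, cause no difficulty under the natural limiting convention.
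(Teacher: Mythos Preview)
Your proposal is correct and shares the paper's overall architecture: both arguments reduce to an adjacent-transposition exchange, condition on the values taken by the immediate neighbours so that the two variables in question live on a fixed interval $(l,u)$ (the paper writes $(a,b)$), and then compare the two conditional orderings.

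The genuine difference lies in how the two-variable inequality is established. The paper works with the truncated normal densities $p_{X\mid a<X<b}$ and $p_{Y\mid a<Y<b}$, proves via three lemmas that they cross exactly once with the expected ordering at the endpoints, deduces $E(X\mid a<X<b)<E(Y\mid a<Y<b)$, and then invokes this expectation ordering to conclude the probability inequality. Your route is more direct: you swap the integration variables in one of the two ordering integrals and reduce the difference to a single integral whose sign is read off from the explicit density ratio $\exp\bigl((\mu_a-\mu_b)(t_a-t_b)/\sigma^{2}\bigr)-1$. This bypasses the truncated-normal machinery entirely and makes the role of the equal-variance hypothesis completely transparent (it is exactly what makes the bivariate density symmetric under simultaneous exchange of points and means). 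It also avoids the paper's final step, where the sufficiency of the expectation inequality is asserted rather than argued; that step is in fact justified by the single-crossing property established earlier (which gives likelihood-ratio, hence stochastic, ordering), but your argument never needs to appeal to it.
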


\begin{proof}
	See Appendix \ref{ap:theorem_1_proof} for full proof. 
	
	Summary: The proof presented in Appendix \ref{ap:theorem_1_proof} shows that ordering any pair of neighbouring robots in a sequence of arriving robots by their mean arrival times will result in a higher probability order of arrival than the opposite ordering. Applying this to all pairs of robots leads to the conclusion that the most likely order of arrival is obtained by sorting the robots by their mean arrival times. 
\end{proof}

\begin{corollary} \label{th:pairwise_swap}
	The second most likely order of arrival can be found by swapping one pair of neighbouring robots in the most likely order of arrival. 
\end{corollary}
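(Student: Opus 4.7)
The plan is to derive the corollary directly from the adjacent-swap argument used to establish Theorem~\ref{th:likely_arrival_order}. Let $\pi^{*}$ denote the most likely arrival order, in which the robots are listed in ascending mean arrival time, and let $\pi$ be any other order. I will measure the ``distance'' of $\pi$ from $\pi^{*}$ by the number of inversions $k(\pi)$, i.e., the number of pairs of positions $(i,j)$ with $i<j$ such that $\mu_{\pi(i)} > \mu_{\pi(j)}$. The orders obtained from $\pi^{*}$ by swapping a single pair of neighbours are exactly those with $k(\pi)=1$, so the corollary reduces to proving that the second most likely order has exactly one inversion.

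First, I would invoke the key fact established in the proof of Theorem~\ref{th:likely_arrival_order}: when two neighbouring robots in an ordering are arranged out of order (i.e., $\mu_{\pi(i)} > \mu_{\pi(i+1)}$), swapping them strictly increases the probability of the resulting order. Next, I would observe that any ordering $\pi$ with $k(\pi)\ge 1$ contains at least one such out-of-order neighbouring pair; this is the standard bubble-sort observation, which I would mention briefly without formal proof. Together these two facts mean that from any $\pi\neq\pi^{*}$ one can exhibit a strictly more probable order $\pi'$ that differs from $\pi$ by a single adjacent transposition and satisfies $k(\pi')=k(\pi)-1$.

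The main step is then a short contradiction argument. Suppose the second most likely order, call it $\pi^{(2)}$, satisfies $k(\pi^{(2)})\ge 2$. Applying the previous observation produces an order $\pi'$ with $k(\pi')=k(\pi^{(2)})-1\ge 1$ and strictly greater probability than $\pi^{(2)}$. Since $k(\pi')\ge 1$ we have $\pi'\neq \pi^{*}$, so $\pi'$ is an order distinct from $\pi^{*}$ that is strictly more probable than $\pi^{(2)}$, contradicting the definition of $\pi^{(2)}$ as the second most likely. Hence $k(\pi^{(2)})=1$, which is precisely the statement that $\pi^{(2)}$ is obtained from $\pi^{*}$ by swapping one pair of neighbouring robots.

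The only genuine obstacle is ensuring that the inequality provided by the adjacent-swap lemma from Theorem~\ref{th:likely_arrival_order} is strict whenever the two means differ; if two robots happen to share the same mean arrival time then swaps involving them leave the probability unchanged and one has to interpret ``second most likely'' modulo this tie. I would handle this by stating the corollary under the mild genericity assumption that the mean arrival times are distinct, and noting that in the degenerate case the statement still holds up to the obvious tie-breaking.
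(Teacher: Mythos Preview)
Your argument is correct and is precisely the natural way to derive the corollary from the adjacent-swap lemma established in Appendix~\ref{ap:theorem_1_proof}. The paper itself does not give a separate proof of Corollary~\ref{th:pairwise_swap}; it is stated without argument as an immediate consequence of Theorem~\ref{th:likely_arrival_order}, so your write-up is in fact supplying the details the paper leaves implicit rather than offering an alternative route.

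Two minor remarks. First, your claim that ``orders with exactly one inversion are precisely the single adjacent transpositions of $\pi^{*}$'' is true but perhaps worth one line of justification: if the unique inversion $(i,j)$ had $j>i+1$, then considering position $i{+}1$ would force a second inversion. Second, your handling of ties is appropriate; the strict inequality in Eq.~(\ref{eq:theorem1}) of the appendix holds only when $\mu_{X}\neq\mu_{Y}$, so the distinct-means assumption (or the tie-breaking caveat you mention) is indeed needed for the corollary to be well-posed.
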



\begin{corollary} \label{c:1}
	The $n$-th most likely order of arrival can be found by swapping at most $n-1$ pairs of neighbouring robots when starting with the most likely order.
\end{corollary}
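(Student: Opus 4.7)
The plan is to prove this by constructing, for any order $\pi_n$ of rank $n$, an explicit chain of adjacent neighbour swaps connecting $\pi_n$ back to the most likely order $\pi_1$ (the ascending-mean order from Theorem~\ref{th:likely_arrival_order}) such that every order along the chain is strictly more probable than $\pi_n$. Once this chain has length at most $n-1$, reversing it exhibits $\pi_n$ as the result of that many neighbour swaps applied to $\pi_1$.

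First I would isolate the following local claim from the proof of Theorem~\ref{th:likely_arrival_order}: for any order $\pi$ containing two neighbouring robots whose mean arrival times are in descending order, transposing that adjacent pair yields an order $\pi'$ with strictly higher probability. The argument of Appendix~\ref{ap:theorem_1_proof} treats a single transposition of neighbours with the remainder of the sequence held fixed, so it applies verbatim inside any surrounding context.

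Given $\pi_n$, let $d = \mathrm{inv}(\pi_n)$ be its number of inversions relative to $\pi_1$. If $d = 0$ then $\pi_n = \pi_1$, $n = 1$, and no swaps are needed. Otherwise there is at least one adjacent inversion in $\pi_n$, and transposing it produces an order $\pi^{(1)}$ with $\mathrm{inv}(\pi^{(1)}) = d - 1$ and, by the local claim, $P(\pi^{(1)}) > P(\pi_n)$. Iterating this bubble-sort step yields a chain $\pi_n = \pi^{(0)}, \pi^{(1)}, \ldots, \pi^{(d)} = \pi_1$ of $d+1$ distinct orders with strictly increasing probability. In particular, $\pi^{(1)}, \ldots, \pi^{(d)}$ are $d$ orders each strictly more probable than $\pi_n$. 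Since $\pi_n$ is ranked $n$-th, at most $n-1$ orders can be strictly more probable than it, so $d \le n - 1$, and reversing the chain gives the required sequence of at most $n-1$ neighbour swaps from $\pi_1$ to $\pi_n$.

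The main obstacle I expect is verifying that the local pairwise comparison is genuinely context-independent: swapping an adjacent descending-mean pair must strictly raise the joint probability \emph{regardless} of where the remaining robots are placed on either side. This requires a careful inspection of the expression manipulated in Appendix~\ref{ap:theorem_1_proof} to confirm that the factors contributed by the unaffected neighbours cancel out of the comparison. A minor subtlety is the classical identification of the minimum number of adjacent transpositions required to sort a permutation with its inversion count, which I would cite as a standard combinatorial fact rather than re-derive.
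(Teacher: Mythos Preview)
The paper states Corollary~\ref{c:1} without proof, so there is nothing to compare against directly; your proposal is correct and supplies precisely the argument the paper omits. The inversion-count bound via a strictly probability-increasing bubble-sort chain is the natural way to extract the corollary from Theorem~\ref{th:likely_arrival_order}.

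On your stated obstacle: the context-independence of the local swap is already built into the structure of Appendix~\ref{ap:theorem_1_proof}. That proof conditions on arbitrary fixed values $a$ and $b$ of the immediate neighbours and establishes the inequality of Eq.~(\ref{eq:theorem1}) pointwise in $(a,b)$; multiplying through by the common conditioning probability gives $P(a<X<Y<b) > P(a<Y<X<b)$. Because the remaining robots are independent of $X$ and $Y$, the probability of any full order containing the adjacent pair $(X,Y)$ is an integral of this local quantity against the joint density of the other variables over their ordered region, and the pointwise inequality survives integration. So the cancellation you anticipate having to verify is immediate from the way the appendix is set up; no further inspection is needed.

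The only tacit assumption worth noting is that the mean arrival times are pairwise distinct, so that inversions are well-defined and each swap yields a strict increase; this sits comfortably alongside the equal-variance hypothesis already assumed for Theorem~\ref{th:likely_arrival_order}.
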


\begin{corollary} \label{c:2}
	The $(n+1)$-th most likely order of arrival can be found by swapping one pair of neighbouring robots in one of the $n$ most likely orders. 
\end{corollary}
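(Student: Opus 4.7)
The plan is to proceed by contradiction, leveraging the adjacent-swap argument that underpins Theorem \ref{th:likely_arrival_order}. Let $\pi_{1}, \pi_{2}, \ldots, \pi_{n+1}$ denote the $n+1$ most likely arrival orders, ranked by decreasing probability. The first step is to observe that since $\pi_{n+1} \neq \pi_{1}$, and $\pi_{1}$ is the unique ordering in which the mean arrival times are ascending, there must exist at least one pair of neighbouring robots in $\pi_{n+1}$ whose means appear in descending order, i.e.\ an adjacent inversion with respect to the sort-by-mean ordering guaranteed by Theorem \ref{th:likely_arrival_order}.

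Next, I would invoke the central lemma from the proof of Theorem \ref{th:likely_arrival_order}: under the equal-variance assumption, swapping any adjacent inverted pair strictly increases the probability of the resulting order. Let $\pi'$ denote the ordering obtained from $\pi_{n+1}$ by performing one such beneficial adjacent swap on an inverted neighbouring pair. Then $P(\pi') > P(\pi_{n+1})$, and therefore $\pi'$ must lie among the $n$ most likely orders, i.e.\ $\pi' \in \{\pi_{1}, \ldots, \pi_{n}\}$. Since $\pi_{n+1}$ is recovered from $\pi'$ by applying the reverse of that very swap (which is itself a swap of neighbouring robots), the conclusion follows: $\pi_{n+1}$ differs from some member of the top-$n$ by exactly one pairwise neighbour swap.

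Corollary \ref{c:1} can then be obtained as an immediate consequence by induction: starting from $\pi_{1}$, each successive $\pi_{k+1}$ differs from a member of $\{\pi_{1}, \ldots, \pi_{k}\}$ by a single adjacent swap, so the total number of adjacent swaps needed to reach $\pi_{n}$ from $\pi_{1}$ is at most $n-1$.

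The main obstacle I anticipate is the careful handling of ties in mean arrival times, which can cause several distinct orderings to share the same probability and thereby blur the meaning of ``the $(n+1)$-th most likely''. I would address this either by adopting the convention that orderings with equal probability may be ranked in any order relative to one another (in which case the contradiction step still delivers a $\pi'$ of strictly larger or equal probability that is reachable by one swap), or by arguing via an infinitesimal perturbation of the means so that strict inequality always holds, and then passing to the limit. Beyond this bookkeeping, the argument is a clean inductive application of the single-swap improvement lemma.
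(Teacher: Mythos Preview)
The paper states Corollary~\ref{c:2} without proof, treating it as an evident consequence of the adjacent-swap machinery developed in the proof of Theorem~\ref{th:likely_arrival_order}. Your proposal makes that implicit argument explicit and does so correctly: any ordering other than $\pi_{1}$ contains an adjacent inversion, swapping it strictly raises the probability by the inequality established as Eq.~(\ref{eq:theorem1}), and the resulting order $\pi'$ must therefore already appear among $\pi_{1},\ldots,\pi_{n}$. Your handling of ties via perturbation, and your observation that Corollary~\ref{c:1} then follows from Corollary~\ref{c:2} by induction, are both sound refinements that the paper itself does not spell out.
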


Together, Corollaries \ref{c:1} and \ref{c:2} enable the application of an approach similar to Murty's ranking algorithm, detailed in Algorithm \ref{a:nth_most_likely}. The algorithm takes as input the list of arrival time distributions and a probability threshold, and returns a list of orders of arrival that has a sum probability greater than the input threshold. It first finds the most likely order of arrival on Line \ref{l:most_likely}. While the sum probability is less than the threshold (Line \ref{l:threshold}), neighbouring orders of the most likely orders of arrival found up to that point are considered. The $i$-th most likely order is then found on Line \ref{l:ith_most_likely}. 
While this approach is only guaranteed to search through the orders of arrival in the order of their likelihood for cases where the distributions describing the arrival times have equal standard deviations, it can be applied to cases where the standard deviations are not equal with the loss of this guarantee.

%

\begin{algorithm}[t]
	\DontPrintSemicolon
	\SetAlgoNoEnd
	\SetKwFunction{CalcProb}{CalcProb}\SetKwFunction{GetMostLikely}{GetMostLikely}
	\SetKwInOut{Input}{Input}\SetKwInOut{Output}{Output}
	
	\Input{List of arrival time distributions, $\boldsymbol{T}$, probability threshold, $\phi$}
	\Output{List of orders of arrival, $\boldsymbol{L}$}
	
	\nl $\boldsymbol{O} \leftarrow \boldsymbol{T}$ sorted by mean values \; \label{l:most_likely}
	\nl $\boldsymbol{L} \leftarrow \textrm{list}[\boldsymbol{O}]$, $i \leftarrow 2$ \;
	\nl \While{$\sum \left(p\left(\boldsymbol{Q}\right) \; \forall \boldsymbol{Q} \in \boldsymbol{L}\right) < \phi$}{ \label{l:threshold}
		\nl \For{$\boldsymbol{Y} \in $ unvisited neighbours of $\boldsymbol{O}$\label{l:neighbours}}{
			\nl Append $\boldsymbol{Y}$ to $\boldsymbol{L}$ \;
		}
		\nl Sort $\boldsymbol{L}$ by probability \;
		\nl $\boldsymbol{O} \leftarrow i$-th entry in $\boldsymbol{L}$ \; \label{l:ith_most_likely}
		\nl $i \leftarrow i+1$ \;
	}
	\caption{Calculate the most likely orders of arrival} \label{a:nth_most_likely}
\end{algorithm} 

%

\section{Results} \label{s:results}

This section presents results for two MRTA scenarios. In the first scenario, the order in which the robots must perform their tasks was specified, and in the second scenario, the order that the tasks are performed in was determined by the order in which the robots arrive at the task. In both cases, it was assumed that the robots have local collision avoidance algorithms to prevent collisions between robots, and any deviations in their travel times due to collision avoidance between the robots was assumed to be captured by the uncertainty of their travel times. All methods were tested on the same hardware and were programmed in Python. 

\subsection{Scenario 1: Task order is specified}

The first scenario consisted of a heterogeneous fleet of 4 different types of robots performing assembly tasks at four locations. There were 4 robots of each type, giving a total of 16 robots. As the tasks must be executed in a specific order, some robots may have to wait for others to complete their tasks before commencing its own task. The problem was to allocate one robot of each type to each location. 100 random instances were considered. In each instance, the means and standard deviations of the distributions describing the uncertain travel times and uncertain task durations of each robot were randomly selected. The optimisation objective was to minimise the expected cost of the construction, where cost is incurred if the construction takes longer than a specified deadline, $\kappa$. 
The analytical framework presented in this paper enables the calculation of the random variable describing the time at which the construction is completed, $T$. From this, the expected tardiness, $\pi$, is calculated as:
\begin{align}
\pi &= E(\max(0, T-\kappa)) \notag \\
& = \frac{\mu_{T} - \kappa}{2} \left(1 + \textrm{erf} \left( \frac{\mu_{T} - \kappa}{\sigma_{T} \sqrt{2}} \right) \right) \notag \\
& \qquad + \frac{\sigma_{T}}{\sqrt{2 \pi}} \exp \left( -\frac{(\mu_{T} - \kappa)^{2}}{2 \sigma_{T}^{2}} \right),
\end{align}
where $E(X)$ is the expected value of the $X$. 

Deterministic (D) and Monte Carlo (M) approaches were used as benchmarks for the analytical method (A), with the number of samples used in the Monte Carlo method varied between 5 and 80. An exhaustive search over all possible allocations was used for testing each method, and ground truth costs were evaluated using a Monte Carlo method with 100,000 samples. Figure \ref{f:scen1results} shows the performance of each method versus its calculation time. For each random instance, the cost of the lowest cost assignment found by any method was subtracted from the cost for each method and then averaged across all instances. 
As can be seen, the analytical approach presented in this paper consistently resulted in the lowest cost allocations. Not considering uncertainty, as demonstrated by the D result, lead to additional costs that were several orders of magnitude higher than the proposed A approach, highlighting the importance of considering uncertainty in these problems. 

\begin{figure}
	\centering
	\includegraphics[width=0.45\textwidth]{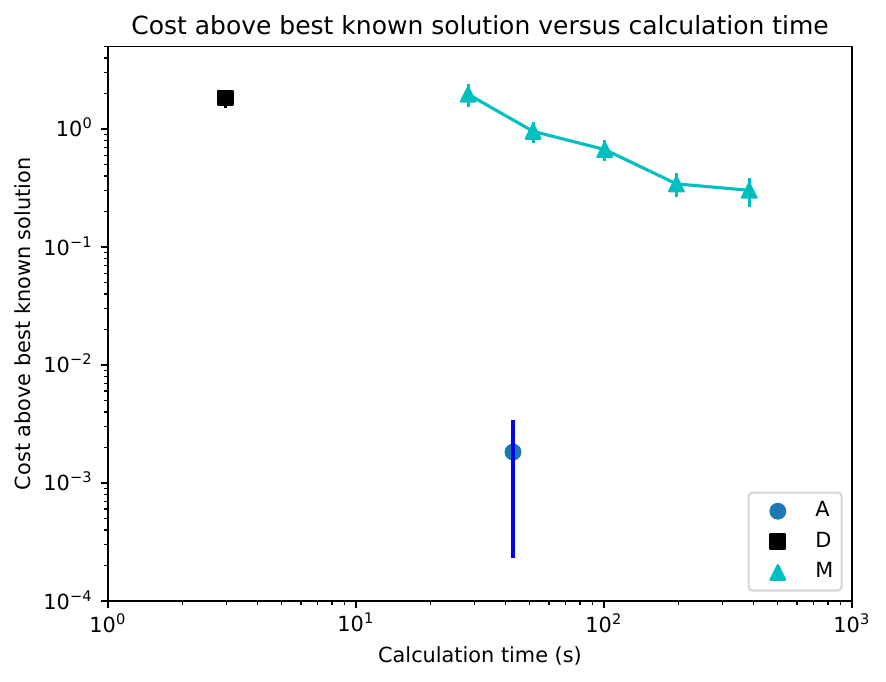}
	
	\caption{Results for Scenario 1 showing the average of the cost minus the best known cost for each scenario versus the calculation time to find the optimal allocation for each method. The error bars show a 95\% confidence interval. The Monte Carlo method used 5, 10, 20, 40, and 80 samples. }
	\label{f:scen1results}
\end{figure}

\subsection{Scenario 2: Task order is unspecified}

The second scenario consisted of 30 controlled robots collecting packages from separate collection locations and delivering them to their destinations. A set of uncontrolled delivery robots were also operating in the environment, collecting packages from the same collection locations. The resources under contention in this scenario are the package collection locations, as the controlled robots may be required to queue at the collection locations before collecting their packages. It was assumed that the actions of the uncontrolled robots were known to the optimiser. The optimisation aim was to generate an optimal allocation of robots to packages that minimised the expected tardiness cost incurred for delivering packages after their deadline. In contrast to the previous scenario, individual deadlines were considered for each package. The number of robots controlled by other entities that visit each collection location, $n$, was varied between 1 and 6. When $n=6$, the entire scenario consisted of 30 controlled robots and 180 uncontrolled robots. 100 random scenarios were tested for each value of $n$, with random arrival time distributions and task durations for each controlled and uncontrolled robot, and random deadlines for each package. The standard deviations of the arrival time distributions for each robot were set to be equal to evaluate the utility of the method presented in Section \ref{s:most_likely} for calculating the set of most likely orders of arrival. 

The analytical method was tested using both \texttt{mvnun} to calculate the probability of an order of arrival (A), and the estimation presented in Section \ref{s:prob_est} (AEst). Algorithm~\ref{a:nth_most_likely} was also used in conjunction with A to limit the number of orders considered, and values of 80\%, 90\%, 95\%, 99\% and 100\% were used for the probability threshold. These approaches were benchmarked against a deterministic approach (D), and a Monte Carlo approach (M). The number of samples used in M was varied between 10 and 10,000 for the cases where $n \le 5$, and between 10 and 100,000 for the case where $n=6$. As only one controlled robot goes to each collection location, the cost incurred by one of the controlled robots does not depend on the assignments of the other controlled robots. Therefore, the cost associated with each robot being assigned to each task can be calculated prior to the optimisation and the Hungarian algorithm \cite{Kuhn1955} can then be used to calculate the optimal allocation. Ground truth costs were evaluated using a Monte Carlo method with 100,000 samples when $n \le 5$, and 1,000,000 samples when $n=6$. 

Figure \ref{f:scen2_results} shows the results for each method versus the calculation time for $n\in\{1,3,6\}$. As can be seen, the A method consistently achieved the lowest cost solutions when using a probability threshold of 100\%. This came at the expense of calculation time as the number of external robots was increased. Lowering the probability threshold resulted in lower calculation times at the expense of solution cost. This highlights the necessity of considering as many of the orders of arrival as possible in order to calculate an accurate representation of the cost of an assignment. 
The AEst approach gave a reduction in calculation time of a factor of approximately 5 over the A approach with a probability threshold of 100\%, with only a marginal increase in the solution cost. In the case where $n=6$, the proposed approaches produced similar results to the M approach using 100,000 samples. However, the proposed approaches had the benefit of higher consistency in this case. Similar to the previous scenario, the D results were several orders of magnitude higher than the approaches incorporating uncertainty.

\begin{figure}
	\centering
	\subfloat[1 uncontrolled robot per location ($n=1$)]{
		\includegraphics[width=0.44\textwidth]{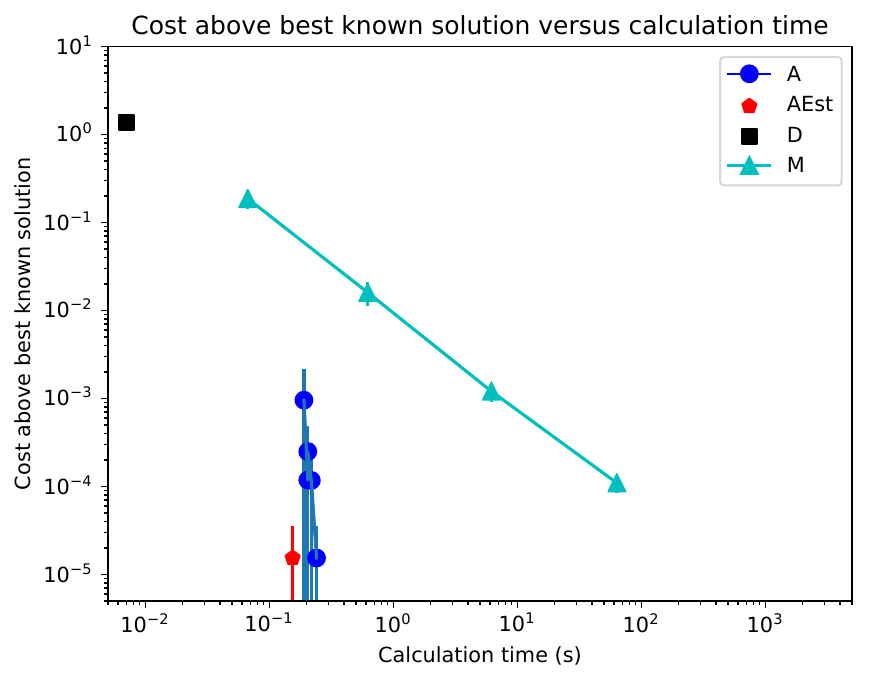}\label{sf:scen2_1}
	}	
	
	\subfloat[3 uncontrolled robots per location ($n=3$)]{
		\includegraphics[width=0.44\textwidth]{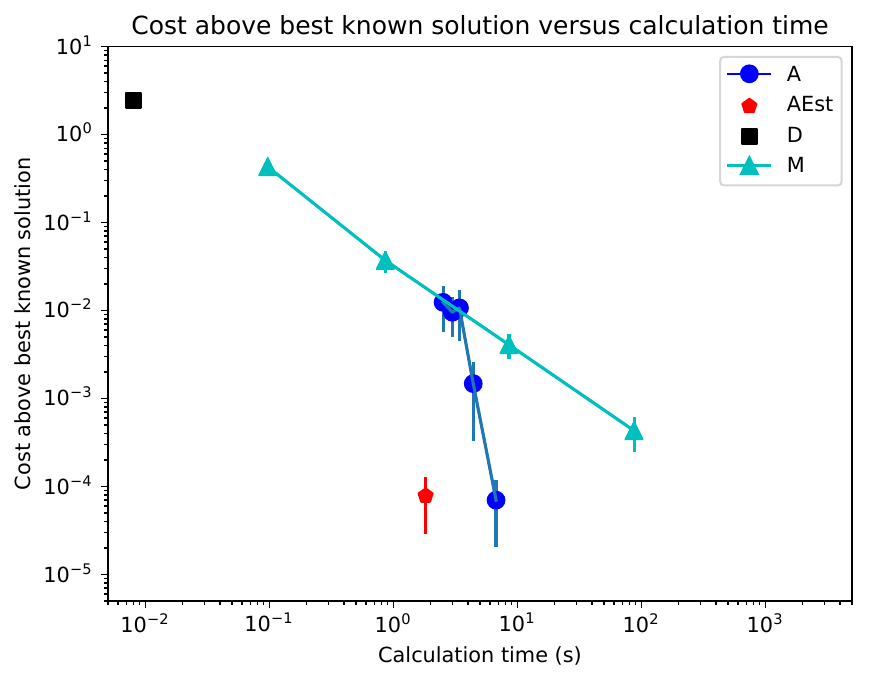}\label{sf:scen2_3}
	}
	
	\subfloat[6 uncontrolled robots per location ($n=6$)]{
		\includegraphics[width=0.44\textwidth]{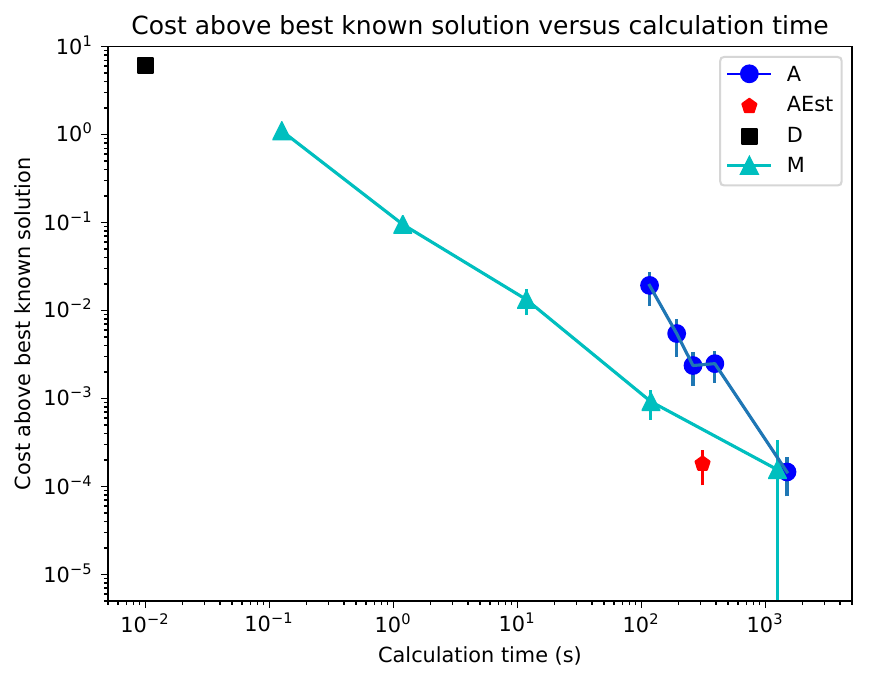}\label{sf:scen2_6}
	}
	
	\caption{Results for Scenario 2 showing the average of the cost for each scenario minus the best known cost versus the calculation time to find the optimal allocation. Values for the number of robots controlled by other entities at each location, $n$, of 1, 3, and 6, were used, as noted under each figure. The error bars show a 95\% confidence interval. The A method used thresholds on the sum of the probabilities of 80\%, 90\%, 95\%, 99\%, and 100\%. The MC method used 10, 100, 1,000, and 10,000 samples. In (c), 100,000 samples was also used. } \label{f:scen2_results}
\end{figure}

The number of orders of arrival considered by the A method suffers from factorial growth as the number of external robots is increased. These results suggest that, for cases where $n>6$ and the robots arrive at similar times, the M approach will outperform the A and AEst approaches. However, if it is possible to separate the robots into multiple groups where the probability that any of the robots in an earlier group will arrive after any of the robots in a later group is negligible, then the proposed approaches are potentially more suitable. The computationally expensive parts of the approaches were calculating the probability of an order of arrival (81\% of calculation time for A, 18\% of calculation time for AEst) and conditioning the arrival times on the order of arrival (14\% of calculation time for A, 58\% of calculation time for AEst). 

\section{Conclusion} \label{s:conc}

This paper presented an analytical framework for modelling timing uncertainty and resource contention in multi-robot scenarios. The framework was shown to significantly outperform deterministic approaches, and produce more accurate results than Monte Carlo methods with similar run-times. A key advantage of this approach over numerical methods is that it produces an accurate probability distribution of the result that can then be exploited in optimisation methods through approaches such as chance constrained programming. Certain aspects of the framework suffer from factorial computational complexity with the number of robots considered. Thus, the most promising avenues for future work focus on developing methods of segmenting the problem so that the number of robots considered by the framework at each resource is minimised. For example, splitting the robots into different arrival groups such that the arrival orders of each group can be considered separately. Other suggestions for future work include integrating this framework with an optimisation method such as branch and bound that considers partial solutions to further reduce the number of orders of arrival considered, and more accurately modelling other effects resulting from interactions between robot such as local collision avoidance. 

\appendices
\section{Proof of Theorem \ref{th:likely_arrival_order}} \label{ap:theorem_1_proof}

This section proves the claim of Theorem \ref{th:likely_arrival_order} that the most likely order of arrival in the case where the variances of the arrival times are equal is the order in which the mean arrival times are ascending. This can be restated as, for any number of independent normally distributed random variables with equal variances, the most likely sequence of the random variables resulting from sorting using their actual values is the sequence in which the random variables are sorted by their mean values. 

Consider two neighbouring random variables, $X$ and $Y$, in a sequence of random variables, where $X$ and $Y$ are independent and normally distributed, $X\sim \mathcal{N}(\mu_{X},\sigma^{2})$ and $Y\sim \mathcal{N}(\mu_{Y},\sigma^{2})$, with $\mu_{X} < \mu_{Y}$ and equal variances. This section will show that, regardless of the random variables before and after $X$ and $Y$, the probability that $X$ occurs before $Y$ is higher than the probability that $Y$ occurs before $X$. Let $a$ and $b$ be constants representing the values taken by the random variables immediately before and after $X$ and $Y$. Then this can be stated mathematically as:
\begin{multline} \label{eq:theorem1}
P(X < Y | a < X < b, a < Y < b) \\
> P(Y < X | a < X < b, a < Y < b). 
\end{multline}

\begin{lemma} \label{lemma:equality}
	$p_{X|a<X<b}(x) = p_{Y|a<Y<b}(x)$ for one and only one value of $x$. 
\end{lemma}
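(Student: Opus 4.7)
The plan is to write both conditional densities explicitly as truncated normals supported on $(a,b)$ and then analyze their pointwise ratio as a function of $x$. Writing $Z_X = \Phi((b-\mu_X)/\sigma) - \Phi((a-\mu_X)/\sigma)$ and similarly $Z_Y$ for the normalization constants, the conditional densities on $(a,b)$ are proportional to $\phi((x-\mu_X)/\sigma)$ and $\phi((x-\mu_Y)/\sigma)$ respectively. Because the variances coincide, the quadratic terms in the exponents cancel and the ratio of the unnormalized Gaussian factors simplifies to $\exp\bigl((\mu_X-\mu_Y)(2x-\mu_X-\mu_Y)/(2\sigma^2)\bigr)$, which is an exponential of a linear function of $x$ with strictly negative slope (since $\mu_X<\mu_Y$). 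Multiplying by the fixed positive constant $Z_Y/Z_X$ leaves the ratio $p_{X|a<X<b}(x)/p_{Y|a<Y<b}(x)$ strictly decreasing in $x$ throughout $(a,b)$.

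Strict monotonicity of this ratio gives the uniqueness half of the lemma immediately: the equation $p_{X|a<X<b}(x)=p_{Y|a<Y<b}(x)$ on $(a,b)$ is equivalent to setting the ratio equal to $1$, and a strictly monotonic continuous function attains any particular value at most once.

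For existence I would use an integral/sign argument. The function $f(x)=p_{X|a<X<b}(x)-p_{Y|a<Y<b}(x)$ is continuous on $(a,b)$ and, because both conditional densities integrate to $1$ over $(a,b)$, it satisfies $\int_a^b f(x)\,\mathrm{d}x=0$. If $f$ had constant sign on $(a,b)$ it could not have zero integral, so by the intermediate value theorem $f$ must vanish at some $x^{\star}\in(a,b)$. Combined with the uniqueness just shown, $x^{\star}$ is the unique crossing point.

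The main conceptual obstacle, rather than any calculation, is recognising that the equal-variance hypothesis is precisely what makes the log-ratio linear (instead of quadratic) in $x$, and hence the ratio strictly monotonic; without it, two crossings would be possible. A minor bookkeeping point is that outside $(a,b)$ both conditional densities vanish identically and thus agree trivially, so the statement ``one and only one value of $x$'' is to be read as one value within the support $(a,b)$; I would mention this briefly to avoid ambiguity before invoking the lemma in the proof of Theorem \ref{th:likely_arrival_order}.
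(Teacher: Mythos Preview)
Your proposal is correct and follows essentially the same route as the paper: both arguments establish existence from the fact that the two truncated densities integrate to $1$ over $(a,b)$, and both obtain uniqueness from the observation that, with equal variances, the log of the density ratio is linear in $x$. The only cosmetic difference is that the paper finishes by solving that linear equation to exhibit the crossing point explicitly, whereas you invoke strict monotonicity of the ratio to conclude it hits $1$ at most once; these are the same computation phrased two ways.
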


\begin{proof}
	The probability distribution function $p_{X|a<X<b}(x)$ is a normal distribution that is truncated between $a$ and $b$. The PDF of a truncated normal distribution is given by:
	\begin{equation} \label{eq:truncated}
	p_{X|a<X<b}(x) = \frac{\frac{1}{\sigma} \phi \left( \frac{x-\mu_{X}}{\sigma}\right)}{\Phi\left( \frac{b-\mu_{X}}{\sigma}\right) - \Phi \left( \frac{a-\mu_{X}}{\sigma}\right)},
	\end{equation}
	for $a\le x\le b$, and 0 otherwise, where $\phi(.)$ and $\Phi(.)$ are defined in Eqs. (\ref{eq:normal_pdf}) and (\ref{eq:normal_cdf}) respectively. As the integrals of $p_{X|a<X<b}(x)$ and $p_{Y|a<Y<b}(x)$ are both 1, it follows that they must intersect at least once in the range $a<x<b$. The following proof shows that they intersect at most once:
	\begin{align}
	p_{X|a<X<b}(x) &= p_{Y|a<Y<b}(x) \\
	\frac{\frac{1}{\sigma} \phi \left( \frac{x-\mu_{X}}{\sigma}\right)}{\Phi\left( \frac{b-\mu_{X}}{\sigma}\right) - \Phi \left( \frac{a-\mu_{X}}{\sigma}\right)} & = \frac{\frac{1}{\sigma} \phi \left( \frac{x-\mu_{Y}}{\sigma}\right)}{\Phi\left( \frac{b-\mu_{Y}}{\sigma}\right) - \Phi \left( \frac{a-\mu_{Y}}{\sigma}\right)} \\
	\frac{ \phi \left( \frac{x-\mu_{X}}{\sigma}\right)}{\phi \left( \frac{x-\mu_{Y}}{\sigma}\right)} &= \frac{\Phi\left( \frac{b-\mu_{X}}{\sigma}\right) - \Phi \left( \frac{a-\mu_{X}}{\sigma}\right)}{\Phi\left( \frac{b-\mu_{Y}}{\sigma}\right) - \Phi \left( \frac{a-\mu_{Y}}{\sigma}\right)} \label{eq:lemma1_int}. 
	\end{align}
	
	Let:
	\begin{equation}
	c = \frac{\Phi\left( \frac{b-\mu_{X}}{\sigma}\right) - \Phi \left( \frac{a-\mu_{X}}{\sigma}\right)}{\Phi\left( \frac{b-\mu_{Y}}{\sigma}\right) - \Phi \left( \frac{a-\mu_{Y}}{\sigma}\right)}. 
	\end{equation}
	
	Substituting Eq. (\ref{eq:normal_pdf}) into Eq. (\ref{eq:lemma1_int}), taking the natural logarithm, and rearranging for $x$ gives:
	\begin{equation}
	x = \frac{2\sigma^{2}\ln(c) + \mu_{X}^{2} - \mu_{Y}^{2}}{2(\mu_{X} - \mu_{Y})}. 
	\end{equation}
	
\end{proof}

\begin{lemma} \label{lemma:fraction}
	$p_{X|a<X<b}(a) > p_{Y|a<Y<b}(a)$ and $p_{X|a<X<b}(b)<p_{Y|a<Y<b}(b)$.
\end{lemma}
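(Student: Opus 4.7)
The plan is to reduce the endpoint comparison to a single-crossing / monotone-likelihood-ratio argument that piggybacks on Lemma~\ref{lemma:equality}. The observation that makes this work is that the ratio of the two conditional densities is a strictly monotone function of $x$, so once we know from Lemma~\ref{lemma:equality} that they agree at exactly one point $x^\star \in (a,b)$, the sign of the difference on either side of $x^\star$ is completely determined, and in particular the values at the two endpoints are forced.

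First, I would write out the ratio of the two truncated-normal densities using Eq.~(\ref{eq:truncated}). The normalising denominators do not depend on $x$, so after cancellation and collecting the terms that do not depend on $x$ into a single positive constant,
\begin{equation*}
\frac{p_{Y|a<Y<b}(x)}{p_{X|a<X<b}(x)} \;=\; K \cdot \frac{\phi((x-\mu_Y)/\sigma)}{\phi((x-\mu_X)/\sigma)} \;=\; K'\exp\!\left(\frac{(\mu_Y-\mu_X)\,x}{\sigma^2}\right),
\end{equation*}
where $K,K'>0$ are independent of $x$. Because $\mu_X<\mu_Y$ and $\sigma>0$, this ratio is strictly increasing in $x$ on $[a,b]$.

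Second, I would invoke Lemma~\ref{lemma:equality} to pin down the unique crossing $x^\star\in(a,b)$ at which the ratio equals $1$. By strict monotonicity, the ratio is then $<1$ on $[a,x^\star)$ and $>1$ on $(x^\star,b]$. Specialising to $x=a$ yields $p_{X|a<X<b}(a) > p_{Y|a<Y<b}(a)$, and specialising to $x=b$ yields $p_{X|a<X<b}(b) < p_{Y|a<Y<b}(b)$, which are precisely the two claims of the lemma.

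The main obstacle I anticipate is simply verifying that Lemma~\ref{lemma:equality} places the crossing strictly inside the open interval $(a,b)$ rather than at an endpoint. This should follow cleanly because both truncated densities are strictly positive on $(a,b)$ and each integrates to $1$: if the monotone ratio failed to cross $1$ in the interior, one density would dominate the other pointwise, contradicting equality of the integrals. A purely calculus-based alternative exists---differentiate $f(\mu) = \phi((a-\mu)/\sigma)/[\Phi((b-\mu)/\sigma)-\Phi((a-\mu)/\sigma)]$ in $\mu$ and reduce to the inequality $u[\Phi(v)-\Phi(u)] < \phi(u)-\phi(v)$ for $u<v$, which can be handled by checking that $g(v)=u[\Phi(v)-\Phi(u)]+\phi(v)-\phi(u)$ satisfies $g(u)=0$ and $g'(v)=(u-v)\phi(v)<0$---but the monotone-likelihood-ratio route is considerably cleaner because Lemma~\ref{lemma:equality} has already done the real analytic work.
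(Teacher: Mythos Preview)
Your proposal is correct and follows essentially the same route as the paper. Both arguments hinge on the monotone likelihood ratio of equal-variance normals: the paper shows that $\phi((b-\mu_X)/\sigma)/\phi((a-\mu_X)/\sigma) < \phi((b-\mu_Y)/\sigma)/\phi((a-\mu_Y)/\sigma)$ reduces to $\mu_X<\mu_Y$ and then combines this endpoint-ratio inequality with the single crossing from Lemma~\ref{lemma:equality} and the unit-integral constraint to fix which density dominates at which endpoint, whereas you compute the same exponential ratio as a function of $x$ and read off strict monotonicity directly before invoking Lemma~\ref{lemma:equality}.
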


\begin{proof}
	By Lemma \ref{lemma:equality}, $p_{X|a<X<b}$ and $p_{Y|a<Y<b}$ intersect only once. As $p_{X|a<X<b}$ and $p_{Y|a<Y<b}$ are probability distributions, they must be positive and have an integral of 1. Thus, it is sufficient to show that
	\begin{equation}
	\dfrac{p_{X|a<X<b}(b)}{p_{X|a<X<b}(a)} < \dfrac{p_{Y|a<Y<b}(b)}{p_{Y|a<Y<b}(a)},
	\end{equation}
	to prove the Lemma. Substituting in Eq. (\ref{eq:truncated}) gives:
	\begin{equation}
	\frac{ \phi \left( \frac{b-\mu_{X}}{\sigma}\right)}{\phi \left( \frac{a-\mu_{X}}{\sigma}\right)} < \frac{ \phi \left( \frac{b-\mu_{Y}}{\sigma}\right)}{\phi \left( \frac{a-\mu_{Y}}{\sigma}\right)}. 
	\end{equation}
	
	Substituting in Eq. (\ref{eq:normal_pdf}), taking the natural logarithm, and simplifying gives:
	\begin{equation}
	\mu_{X} < \mu_{Y}. 
	\end{equation}
	
\end{proof}

\begin{lemma} \label{lemma:expected}
	$E(X | a < X < b) < E(Y | a < Y < b)$. 
\end{lemma}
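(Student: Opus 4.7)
The plan is to combine Lemmas \ref{lemma:equality} and \ref{lemma:fraction} into a single-crossing statement about the two truncated densities, and then apply the standard argument that single crossing plus equal mass implies strict ordering of expectations. Write $p_X(x) := p_{X|a<X<b}(x)$ and $p_Y(x) := p_{Y|a<Y<b}(x)$, and set $g(x) := p_Y(x) - p_X(x)$ on $[a,b]$. Lemma \ref{lemma:equality} says $g$ has exactly one zero $x^* \in (a,b)$, and Lemma \ref{lemma:fraction} says $g(a) < 0$ and $g(b) > 0$. Since $g$ is continuous, these facts together force $g(x) < 0$ on $[a, x^*)$ and $g(x) > 0$ on $(x^*, b]$.

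Now I would write
\begin{equation*}
E(Y \mid a<Y<b) - E(X \mid a<X<b) = \int_a^b x\, g(x)\, dx,
\end{equation*}
and use the fact that $\int_a^b g(x)\, dx = 0$ (both $p_X$ and $p_Y$ integrate to $1$ on $[a,b]$) to subtract $x^* \int_a^b g(x)\, dx$ from the right-hand side without changing its value, obtaining
\begin{equation*}
E(Y \mid a<Y<b) - E(X \mid a<X<b) = \int_a^b (x - x^*)\, g(x)\, dx.
\end{equation*}
The integrand is nonnegative throughout $[a,b]$: on $[a, x^*)$ both $x - x^*$ and $g(x)$ are negative, and on $(x^*, b]$ both are positive. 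Continuity of $g$ together with the strict inequalities at the endpoints from Lemma \ref{lemma:fraction} guarantees that the integrand is strictly positive on sets of positive measure near $a$ and $b$, so the integral is strictly positive, which is exactly the claim of the lemma.

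The main obstacle is essentially cosmetic: one has to be careful to record that $g$ is continuous and to justify why strict positivity of the integrand on some nonempty interval is enough to force the integral to be strictly positive. No additional calculation is required beyond Lemmas \ref{lemma:equality} and \ref{lemma:fraction}, which already do all the analytic work of locating the crossing point and fixing the sign at the boundary.
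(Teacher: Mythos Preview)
Your proof is correct and follows essentially the same strategy as the paper: both use Lemmas \ref{lemma:equality} and \ref{lemma:fraction} to establish a single crossing point with known sign pattern, then shift by that crossing point (the paper via the change of variables $x\mapsto x+\lambda$, you via subtracting $x^*\int g=0$) so that the integrand $(x-x^*)g(x)$ is nonnegative throughout. The only difference is cosmetic---you work with the difference $g=p_Y-p_X$ directly and are slightly more explicit about continuity and strict positivity, whereas the paper compares the two integrals $\int x\,p_X$ and $\int x\,p_Y$ separately after the shift.
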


\begin{proof}
	Let $\lambda$ satisfy $p_{X|a<X<b}(\lambda) = p_{Y|a<Y<b}(\lambda)$. Then:
	\begin{align}
	E(X | a < X < b) &= \int_{a}^{b}x p_{X|a<X<b}(x) \textrm{d}x \\
	& = \int_{a-\lambda}^{b-\lambda} (x+\lambda) p_{X|a<X<b}(x+\lambda) \textrm{d}x \\
	& = \lambda + \int_{a-\lambda}^{b-\lambda} x p_{X|a<X<b}(x+\lambda) \textrm{d}x. 
	\end{align}
	
	By Lemmas \ref{lemma:equality} and \ref{lemma:fraction}: 
	\begin{equation}
	p_{X|a<X<b}(x+\lambda) > p_{Y|a<Y<b}(x+\lambda) \quad \forall x < 0, 
	\end{equation}
	\begin{equation}
	p_{X|a<X<b}(x+\lambda) < p_{Y|a<Y<b}(x+\lambda) \quad \forall x > 0. 
	\end{equation}
	
	Therefore:
	\begin{multline}
	\int_{a-\lambda}^{b-\lambda} x p_{X|a<X<b}(x+\lambda) \textrm{d}x \\
	< \int_{a-\lambda}^{b-\lambda} x p_{Y|a<Y<b}(x+\lambda) \textrm{d}x. 
	\end{multline}
	
\end{proof}

It is sufficient to show that $E(X | a < X < b) < E(Y | a < Y < b)$ to prove that $X | a < X < b$ is more likely to occur before $Y | a < Y < b$ than the opposite order. Thus, Lemma \ref{lemma:expected} proves Eq. (\ref{eq:theorem1}). Applying Eq. (\ref{eq:theorem1}) to all pairs of random variables leads to the conclusion that the most likely sequence will be the sequence resulting from sorting the random variables by their mean values, proving the theorem.

\bibliographystyle{IEEEtran}
\bibliography{ref}

\end{document}